\newtheorem{theorem}{Theorem}
\newtheorem{lemma}{Lemma}
\newtheorem{assumption}{Assumption}
\author{Jihoon Suh \and Takashi Tanaka
\thanks{This work is supported by NSF Award 1944318. }
\thanks{Both authors are with the Department of Aerospace Engineering and Engineering Mechanics, University of Texas at Austin, TX 78712, USA.
        {\tt\small \{jihoonsuh,ttanaka\}@utexas.edu}}%
}
\begin{document}

\title{\LARGE \bf
Encrypted Value Iteration and Temporal Difference Learning over Leveled Homomorphic Encryption
}

\maketitle

\thispagestyle{empty}
\pagestyle{empty}

\begin{abstract}
We consider an architecture of confidential cloud-based control synthesis based on Homomorphic Encryption (HE). Our study is motivated by the recent surge of data-driven control such as deep reinforcement learning, whose heavy computational requirements often necessitate an outsourcing to the third party server. To achieve more flexibility than Partially Homomorphic Encryption (PHE) and less computational overhead than Fully Homomorphic Encryption (FHE), we consider a Reinforcement Learning (RL) architecture over Leveled Homomorphic Encryption (LHE).
We first show that the impact of the encryption noise under the Cheon-Kim-Kim-Song (CKKS) encryption scheme on the convergence of the model-based tabular Value Iteration (VI) can be analytically bounded. We also consider secure implementations of TD(0), SARSA(0) and Z-learning algorithms over the CKKS scheme, where we numerically demonstrate that the effects of the encryption noise on these algorithms are also minimal.
\end{abstract}

\section{INTRODUCTION}
The growing demand of applications such as smart grids \cite{8625421} and smart cities \cite{7945258} assures the important role of data usage and connectivity via advanced data-driven algorithms. 
However, the utility of advanced data-driven algorithms may be limited in many real-world control systems since  components within the network are often resource-constrained. The cloud-based control can be an appealing solution in such scenarios, although a naive outsourcing comes with a steep cost of privacy. Recent literature in control has shown that the use of HE could mitigate the issue of privacy to a certain extent. However, there are many remaining challenges in encrypted control technologies. For instance, in the current encrypted control literature, the potential of FHE is not fully utilized, even though FHE would be necessary to encrypt advanced control algorithms, such as deep reinforcement learning \cite{mnih2015human}.
RL framework has recently accomplished impressive feats with successful applications from AlphaGo Zero to traffic signal control, robot controls and many others, \cite{silver2017mastering, 5478405, 10.5555/2946645.2946684}. Its success is also bolstered by the availability of large-scale data and connectivity. Thus, we wish to study a cloud-based control synthesis architecture which can integrate two promising technologies, namely HE and RL. In particular, we examine the effects of using HE on RL both theoretically and numerically.

The first applications of HE to control systems utilized PHE since its relatively low computational overhead was suitable for real-time implementations. On the other hand, FHE or LHE can support more general classes of computations in the ciphertext domain, although its computational overhead makes it less suitable for real-time systems. In this regard, we first make an observation that the encrypted control can be better suited to the control synthesis problems rather than control implementations. For instance, explicit model predictive control (MPC) or RL problems require heavy computations or a large set of data to synthesize the control policy but implementing such policy may be kept local as they are often relatively light computations. In addition, real-time requirement is less stringent in control law synthesis problems than control law implementation counterpart.

\subsection{Related Work}
Various HE schemes were applied to the linear controller in \cite{Kogiso2015CybersecurityEO, FAROKHI2016163, KIM2016175}. Importance of quantization and using non-deterministic encryption scheme was identified in \cite{Kogiso2015CybersecurityEO}. Necessary conditions on encryption parameters for closed-loop stability were shown by \cite{FAROKHI2016163}. The feasibility of FHE in encrypted control was first shown in \cite{KIM2016175} by managing multiple controllers. Evaluation of affine control law in explicit MPC was shown by \cite{8126799}. In \cite{8619835} encrypted implicit MPC control was shown to be feasible but the cost of privacy as increased complexity was identified. Real-time proximal gradient method was used in \cite{DARUP2018535} to treat encrypted implicit MPC control and it showed undesired computation loads of encrypted control system. Privacy and performance degradation was further highlighted in \cite{8814398} through experiments on motion control systems. Dynamic controller was encrypted in \cite{kim2019comprehensive} using FHE exploiting the stability of the system while not relying on bootstrapping. Despite significant progress, encrypted control has been limited to simple computations where the motivation for cloud computing is questionable. 

\subsection{Contribution}
To study the feasibility of RL over HE, this paper makes the following contributions. First, as a first step towards more advanced problems of private RL, we formally study the convergence of encrypted tabular VI. We show that the impact of the encryption-induced noise can be made negligible if the $Q$-factor is decrypted in each iteration. Second, we present implementation results of temporal difference (TD) learning (namely, TD(0), SARSA(0), and Z-learning \cite{Todorov11478}) over the CKKS encryption scheme and compare their performances with un-encrypted cases. Although the formal performance analysis for this class of algorithms are difficult with encryption noise, we show numerically that these RL schemes can be implemented accurately over LHE.

\subsection{Preview}
In section \ref{Preliminaries}, we summarize the relevant essentials of RL and HE. In section \ref{Problem}, we set up the encrypted control synthesis problem and specialize it to the encrypted model-based and model-free RL algorithms. We analyze how the encryption-induced noise influences the convergence of standard VI. In section \ref{Enc_algorithms}, we perform the simulation studies to demonstrate the encrypted model-free RL over HE. Finally, in section \ref{conclusion}, we summarize our contributions and discuss the future research directions.

\section{Preliminariess}
\label{Preliminaries}
\subsection{Reinforcement Learning}
RL can be formalized through the finite state Markov decision process (MDP). We define a finite MDP as a tuple $(\mathcal{S}, \mathcal{A}, P, R)$, where $\mathcal{S} = (1, 2, \dots, n)$ is a finite state space, $\mathcal{A}$ is a finite action space, $P = P(s_{t+1}|s_t, a_t)$ is a Markov state transition probability and $R = R(s_t, a_t, s_{t+1})$ is a reward function for the state-transition. A policy can be formalized via a sequence of stochastic kernels $\pi=(\pi_0, \pi_1, ...)$, where $\pi_t(a_t|s_{0:t+1}, a_{0:t})$ is a mapping for each state $s_t \in\mathcal{S}$ to the probability of selecting the action $a_t\in\mathcal{A}$ given the history $(s_{0:t-1}, a_{0:t-1})$. 

We consider a discounted problem with a discount factor $\gamma \in [0, 1)$ throughout this paper. Under a stationary policy $\pi$, the Bellman's optimality equation can be defined through a recursive operator $T$ applied on the initial value vector $V^{\pi}_0$ of the form $V^{\pi} = (V(1), \dots,  V(n))^{\pi}$:
\begin{equation}
    \begin{aligned}[b]
        (TV^{\pi})(s_t) = \max_{a_t} \sum_{s_{t+1}}P[R + \gamma V^\pi(s_{t+1})].
    \end{aligned}
    \label{eq:bellman_eq}
\end{equation}

Throughout this paper, we adopt the conventional definition for the maximum norm $\|\cdot\|_{\infty}$. The following results are standard \cite{Bertsekas2009}.
\begin{lemma}
\label{lemma:optimal value}
    \begin{enumerate}[label=(\alph*)]
    \item For any vectors $V$ and $\bar{V}$, we have $$\|TV - T\bar{V}\|_{\infty} \leq \gamma\|V - \bar{V}\|_{\infty}.$$
    \item The optimal value vector $V^*$ is the unique solution to the equation $V^* = TV^*$,
    \item We have $$\lim_{k\to\infty} T^k V = V^*$$ for any vector $V$.
    \end{enumerate}
\end{lemma}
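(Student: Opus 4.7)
The plan is to prove the three parts in order, since (b) and (c) follow almost immediately from the contraction property (a) via the Banach fixed-point theorem on the complete metric space $(\mathbb{R}^n, \|\cdot\|_\infty)$. So the bulk of the work is establishing the $\gamma$-contraction of $T$.

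For part (a), I would fix an arbitrary state $s \in \mathcal{S}$ and evaluate $(TV)(s) - (T\bar{V})(s)$. The key tool is the elementary inequality $|\max_a f(a) - \max_a g(a)| \le \max_a |f(a) - g(a)|$, which reduces the problem to bounding the action-wise difference. Since the reward terms inside the bracket $[R + \gamma V]$ in \eqref{eq:bellman_eq} are identical for $V$ and $\bar{V}$, they cancel, leaving only
\begin{equation*}
\bigl| (TV)(s) - (T\bar{V})(s) \bigr| \le \gamma \max_{a} \sum_{s'} P(s'|s,a)\, |V(s') - \bar{V}(s')|.
\end{equation*}
Replacing $|V(s') - \bar{V}(s')|$ by its maximum $\|V - \bar{V}\|_\infty$ and using $\sum_{s'} P(s'|s,a) = 1$ yields the claim after taking the maximum over $s$.

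For part (b), I would invoke the Banach fixed-point theorem: $\mathbb{R}^n$ equipped with $\|\cdot\|_\infty$ is a complete metric space, and (a) shows $T$ is a $\gamma$-contraction with $\gamma \in [0,1)$, so $T$ admits a unique fixed point, which we identify as $V^*$. For part (c), the same theorem guarantees iterative convergence from any starting point; alternatively, one can give an elementary argument using (a) together with (b): since $V^* = T V^* = T^k V^*$ for every $k$, iterating the contraction bound gives
\begin{equation*}
\|T^k V - V^*\|_\infty = \|T^k V - T^k V^*\|_\infty \le \gamma^k \|V - V^*\|_\infty,
\end{equation*}
which tends to $0$ as $k \to \infty$ because $\gamma < 1$.

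The only slightly delicate step is the max-inequality manipulation in (a), together with being careful that the Bellman operator in \eqref{eq:bellman_eq} writes the expectation in a form where the $\max$ is outside the sum; once this is handled cleanly, everything else is a direct appeal to contraction-mapping machinery. I would not expect any real obstacle beyond keeping the notation for the stochastic kernel and the reward consistent with the excerpt.
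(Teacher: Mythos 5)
Your proof is correct and follows the standard contraction-mapping argument: establish the $\gamma$-contraction of $T$ in the sup norm via the inequality $|\max_a f(a)-\max_a g(a)|\le\max_a|f(a)-g(a)|$ together with $\sum_{s'}P(s'|s,a)=1$, then obtain (b) and (c) from the Banach fixed-point theorem. The paper itself omits the proof entirely, citing it as standard from Bertsekas, and your argument is exactly the textbook proof that reference supplies, so there is nothing to reconcile.
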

If some policy $\pi^*$ attains $V^*$, we call $\pi^*$ the optimal policy and the goal is to attain the optimal policy $\pi^*$ for the given MDP environment.

Often the state values are written conveniently as a function of state-action pairs called $Q$-values:
\begin{equation}
    \begin{aligned}[b]
        Q^{\pi}(s_t, a_t) = \sum_{s_{t+1}}P[R + \gamma V^\pi(s_{t+1})],
    \end{aligned}
    \label{eq:Q-values}
\end{equation}
and for all state $s_t$, $V^{\pi}_{k+1}(s_t) = \max_{a_t \in \mathcal{A}} Q^{\pi}(s_t, a_t)$.

RL is a class of algorithms that solves MDPs when the model of the environment (e.g., $P$ and/or $R$) is unknown. RL can be further classified into two groups: model-based RL and model-free RL \cite{silver}. A simple model-based RL first estimates the transition probability and reward function empirically (to build an artificial model) then utilizes the VI to solve the MDP. Model-free RL on the other hand directly computes the value function by averaging over episodes (Monte Carlo methods) or by estimating the values iteratively like TD learning. 

TD(0) is one of the simplest TD learning algorithm where the tuple $(s, R, s')$ is sampled from the one-step ahead observation \cite{sutton_barto_2018}. The on-policy iterative update rule for estimating the value is called TD(0) and is written as follows:
\begin{equation}
    \hat{V}_{t+1}^\pi(s) = \hat{V}_{t}^\pi(s) + \alpha_t(s) \delta_t.
    \label{eq:TD(0)_update_rule}
\end{equation}
$\delta_t$ denotes the TD error and is computed with the sample as
\[
    \delta_t = R(s,a) + \gamma \hat{V}_{t}^\pi(s') - \hat{V}_{t}^\pi(s),
\]
starting with an initial guess $\hat{V}^{\pi}_0$. We use $\hat{V}$ to indicate that the values are estimated. With some standard assumptions on the step size $\alpha_t(s)$ and exploring policies, the convergence of TD(0) is standard in the literature, see \cite{325119}.

\subsection{Homomorphic Encryption}
HE is a structure-preserving mapping between the plain-text domain $\mathcal{P}$ and the cipher-text domain $\mathcal{C}$. Thus, encrypted data in $\mathcal{C}$ with a function $f(\mathcal{C})$ can be outsourced to the cloud for confidential computations.

PHE supports only the addition or the multiplication. On the other hand, LHE can support both the addition and the multiplication but the noise growth of ciphertext multiplication is significant without the bootstrapping \cite{homenc}. The bootstrapping operation can promote a LHE scheme to FHE enabling unlimited number of multiplications but it requires a heavy computation resource on its own. 

In order to use both addition and multiplication necessary in evaluating RL algorithms, we use an LHE. In particular, we employ CKKS encryption scheme proposed in \cite{cryptoeprint:2016:421} as it is well suited for engineering applications and also supports parallel computations. Also, there exists a widely available library optimized for implementations. We will only list and describe here several key operations and properties with more detailed information found in Appendix. We are particularly concerned with the fact that the CKKS encryption is noisy (due to error injection for security) but if properly implemented, the noise is manageable.

The security of CKKS scheme assumes the difficulty of the Ring Learning With Errors (RLWE) problem. Security parameters are a power-of-two integer $\mathcal{N}$, a ciphertext modulus $\mathcal{Q}$ and the variance $\sigma$ used for drawing error from a distribution. The operation \textit{KeyGen} uses security parameters ($\mathcal{N}$, $\mathcal{Q}$, $\mathcal{\sigma}$) to create a $\lambda$-bit public key \textit{pk}, a secret key \textit{sk}, and an evaluation key \textit{evk}.

\section{Encrypted learning over the cloud}
\label{Problem}
\subsection{Cloud-based reinforcement learning}
Our control loop consists of the plant (environment) and the controller (client) and the cloud server. We are concerned with a possible data breach at the cloud. In order to prevent the privacy compromise of our data such as training data set and other parameters at play, we add the homomorphic encryption module as seen in Fig. \ref{fig:2}. In particular, we ignore the malleability risks \cite{inproceedings} inherent to the considered HE scheme, which is beyond the scope of this paper. In our context, $Syn(Enc(\cdot))$ can be thought of as a ciphertext domain implementation of RL algorithms. However, $Syn(Enc(\cdot))$ is a general control synthesis that can be evaluated remotely. Ideal uses for $Syn(Enc(\cdot))$ would include advanced data-driven control synthesis procedures such as MPC, or Deep RL. We emphasize that the encryption adds communication delays for control policy synthesis but does not add an extra computation cost to control implementation. For RL, the controller's task is to implement the most up-to-date state-action map $\pi$ that is recently synthesized, which can be done locally.

We assume the tabular representation of the state and action pairs. The client explores its plant and samples the information set $h_t = (s_{t+1} , a_{t} , r_{t})$. Other necessary parameters such as learning rate $\alpha_{t}(s)$ and $\gamma$ are grouped as $\theta$ denoting the hyperparmeters. This set of data is then encrypted by the CKKS encryption module to generate ciphertexts $\textbf{c}_v$, $\textbf{c}_{v'}$, $\textbf{c}_\alpha$, $\textbf{c}_\gamma$, and $\textbf{c}_r$, which are encrypted data for $\hat{V}_{t}^\pi(s)$, $\hat{V}_{t}^\pi(s')$, $\alpha_t(s)$, $\gamma$, and $R(s,a)$, and will be used to implement \eqref{eq:TD(0)_update_rule}. Note that the subscripts refer to the value when these ciphertexts are decrypted. The cloud is instructed on how to evaluate the requested algorithms in ciphertexts. One may concern about the cloud knowing the form of the algorithm, but this can be resolved by the circuit privacy. That is, we can hide the actual computation steps by adding zeros or multiplying ones in ciphertexts on the original algorithm. We also assume that the cloud is semi-honest so that the cloud faithfully performs the algorithm as instructed. The output of the cloud is a newly synthesized policy $\pi_t(s_{0:t})$, which, after a decryption, can be accessed on real-time by the client to produce the control action $a_t(s_t)$.

\begin{figure}
    \centering
    \includegraphics[scale=0.25]{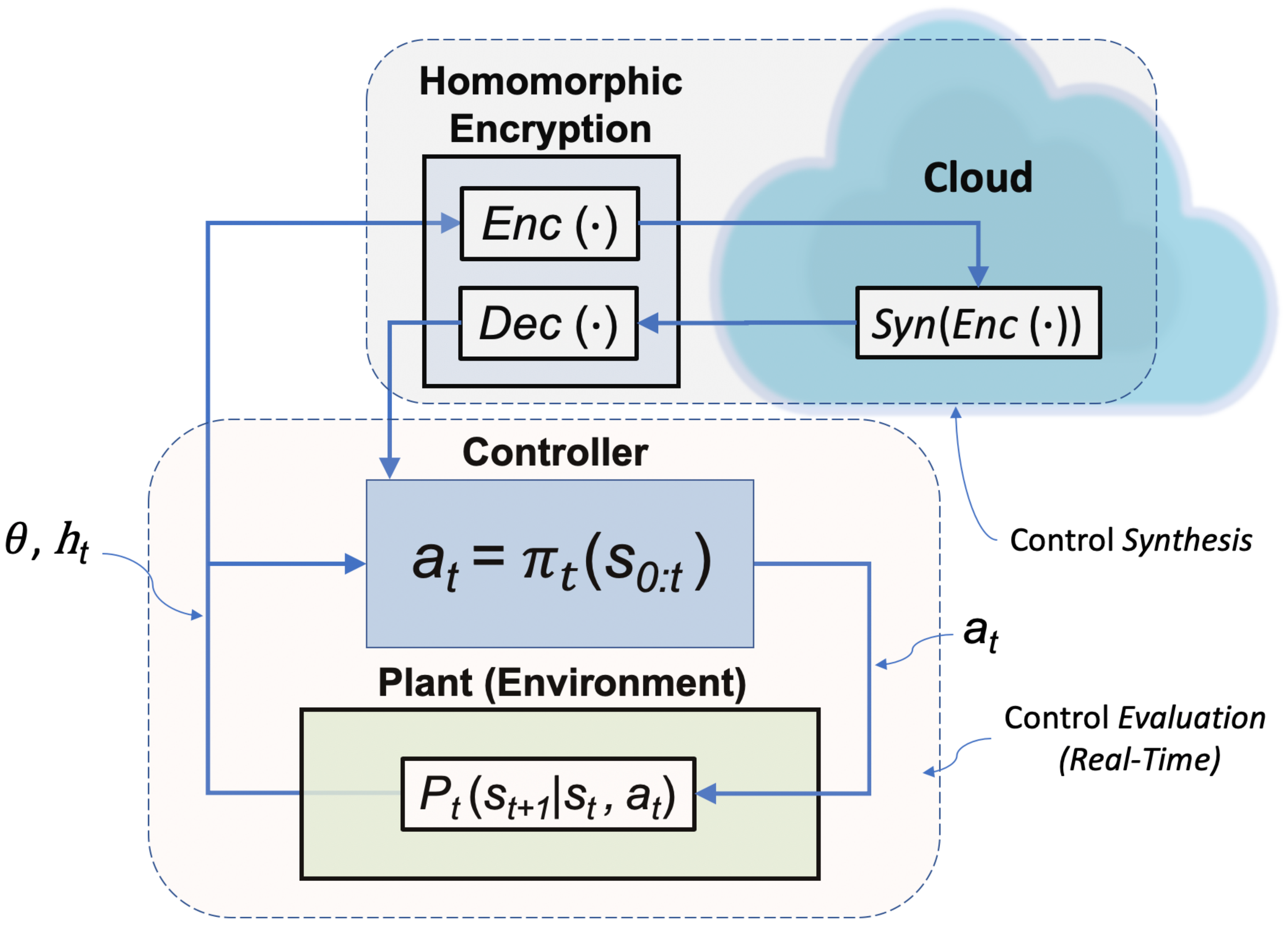}
    \caption{Encrypted RL Control synthesis with cloud in the loop.}
    \label{fig:2}
\end{figure}

As a first step towards investigating more sophisticated RL algorithms, we specialize the proposed framework to solving RL problems in finite MDP with more elementary methods. In particular, we consider the basic model-based RL and model-free RL using tabular algorithms. For the model-based, we theoretically analyze the convergence of VI under the presence of encryption-induced noise. For the model-free, we implement the encrypted TD algorithms to estimate the values and investigate how the encryption noise affects the output.

\subsection{Encrypted Model-based RL}
The client is assumed to explore the plant, sample the information sets $h_t$ and build the model first. A simple model of the state-transition probability can be in the form
\begin{equation}
    P(s_{t+1}|s_t, a_t) = \frac{N(s, a, s')}{N(s, a)},
    \label{eq:model}
\end{equation}
where $N(s, a, s')$ denotes the number of visits to the triplet $(s, a, s')$ and $N(s, a)$ to the pair $(s, a)$ with $s, s' \in  \mathcal{S}, a \in \mathcal{A}$. Similarly, the reward function $R(s_t, a_t, s_{t+1})$ can be quantified by the average rewards accumulated per the state-action pair.

Then, the client can encrypt the initial values of the state along  with the model $P$ and $R$ and the discount rate $\gamma$ and request the cloud to perform the computation of \eqref{eq:Q-values} over the ciphertext domain. Since comparison over HE is non-trivial, the max operation is difficult to be implemented homomorphically. Thus, the cloud computes the $Q$ values and the controller will receive the decrypted $Q$ values and completes the VI process for iteration index $k$:

\begin{equation}
    \tilde{V}^{\pi}_{k+1}(s_t) = \max_{a_t \in \mathcal{A}} \tilde{Q}^{\pi}(s_t, a_t).
\end{equation}

However, note that the $Q$ values computed by the cloud will be noisy due to the encryption, hence we use $\tilde{Q}$ and $\tilde{V}$ to denote the noisy value. Let $w(s_t, a_t)$ be the encryption-induced noise produced by computations over HE. Then, the noisy $Q$ values can be written as
\begin{equation}
    \tilde{Q}^{\pi}(s_t, a_t) = Q^{\pi}(s_t, a_t) + w(s_t, a_t),
    \label{eq:noisy Q}
\end{equation}
where the noise term is bounded such that $|w(s_t, a_t)| \leq \epsilon$ $\forall s_t, a_t$ for some $\epsilon > 0$. Appendix $A$ shows how $\epsilon$ depends on the encryption parameter and operations used to evaluate the given algorithm. 

We now analyze the discrepancy between two sequences of vectors $V^{\pi}_k$ and $\tilde{V}^{\pi}_k$. We separate the analysis into the synchronous and the asynchronous cases. 

\subsubsection{Synchronous VI} The synchronous VI means that the VI is applied to all state $s$ simultaneously. First note that $V^{\pi}_k$ is computed by the noiseless VI
\begin{equation}
    \begin{aligned}[b]
    V^{\pi}_{k+1}(s_t) & =  \max_{a_t \in \mathcal{A}} \sum_{s_{t+1}}P[R + \gamma V^\pi_t(s_{t+1})] \\
                 & =  (TV^{\pi}_k)(s_t) \quad \forall s_t \in \mathcal{S},
     \end{aligned}
     \label{eq:noiselesss VI}
\end{equation}
and $\tilde{V}^{\pi}_k$ is computed by the noisy VI
\begin{equation}
    \begin{aligned}[b]
    \tilde{V}^{\pi}_{k+1}(s_t) & =  \max_{a_t \in \mathcal{A}} \sum_{s_{t+1}}P[R + \gamma V^\pi_k(s_{t+1}) + w(s_t, a_t)]
     \end{aligned}
     \label{eq:noisy VI}
\end{equation}
We will utilize the following simple lemma, whose proof is straightforward and hence omitted.
\begin{lemma}
\label{lemma:max}
For any arbitrary vectors $x=(x(1),\ldots,x(n))$ and $w=(w(1), \dots, w(n))$ such that $\|w\|_{\infty} \leq \epsilon$,
\begin{equation}
    \max_{i} (x(i) + w(i)) = \max_{i}(x(i)) + \tilde{w}
\end{equation}
for some constant $\tilde{w}$ satisfying $|\tilde{w}| \leq \epsilon$.
\end{lemma}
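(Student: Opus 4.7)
The plan is to verify this by a direct sandwich argument on the two maximizers. I would let $i^\star \in \arg\max_i x(i)$ and $j^\star \in \arg\max_i (x(i)+w(i))$, then obtain matching upper and lower bounds for the quantity $\max_i (x(i)+w(i)) - \max_i x(i)$, and finally \emph{define} $\tilde w$ to be exactly this difference.

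For the upper bound, I would write
\[
\max_i (x(i)+w(i)) \;=\; x(j^\star) + w(j^\star) \;\leq\; x(j^\star) + \epsilon \;\leq\; x(i^\star) + \epsilon \;=\; \max_i x(i) + \epsilon,
\]
where the first inequality uses $w(j^\star) \leq \|w\|_\infty \leq \epsilon$ and the second uses the optimality of $i^\star$ for $x$. For the lower bound, I would symmetrically note
\[
\max_i (x(i)+w(i)) \;\geq\; x(i^\star) + w(i^\star) \;\geq\; x(i^\star) - \epsilon \;=\; \max_i x(i) - \epsilon,
\]
by using $j^\star$'s optimality for $x+w$ and then $w(i^\star) \geq -\epsilon$.

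Combining the two bounds gives $\bigl|\max_i (x(i)+w(i)) - \max_i x(i)\bigr| \leq \epsilon$, so setting $\tilde w := \max_i (x(i)+w(i)) - \max_i x(i)$ yields the claimed identity with $|\tilde w| \leq \epsilon$.

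There is no real obstacle here; the only thing to be careful about is that $\tilde w$ is \emph{defined} after the fact (it is not one of the coordinates $w(i)$), and that both maximizers $i^\star$ and $j^\star$ may fail to be unique, which does not affect the argument since we only need the defining inequalities of a maximum. This is presumably why the authors call the proof straightforward and omit it.
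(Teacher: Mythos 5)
Your proof is correct, and since the paper explicitly omits the proof of this lemma as ``straightforward,'' your sandwich argument (bounding $\max_i(x(i)+w(i)) - \max_i x(i)$ from above via the maximizer of $x+w$ and from below via the maximizer of $x$, then defining $\tilde w$ as that difference) is exactly the standard argument the authors presumably had in mind. One tiny slip in your commentary: the lower bound uses only the definition of the maximum of $x+w$ evaluated at $i^\star$, not ``$j^\star$'s optimality,'' but this does not affect the validity of the proof.
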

By applying Lemma $2$ on the noisy VI \eqref{eq:noisy VI},\\
we obtain
\begin{equation}
    \label{eq:noisy VI2}
    \begin{aligned}[b]
        \tilde{V}^{\pi}_{k+1}(s_t) & =  \max_{a_t \in \mathcal{A}} \sum_{s_{t+1}}P[R + \gamma \tilde{V}^\pi_k(s_{t+1}) + w_k(s_t, a_t)] \\
        & = \max_{a_t \in \mathcal{A}} \sum_{s_{t+1}}P[R + \gamma \tilde{V}^\pi_k(s_{t+1})] + \tilde{w}_k(s_t) \\
        & = (T\tilde{V}^\pi_k)(s_t) + \tilde{w}_k(s_t),
     \end{aligned}
\end{equation}
where the vector $\tilde{w}_k$ satisfies $\|\tilde{w}_k\|_{\infty} \leq \epsilon$. In the vector form, $\eqref{eq:noisy VI2}$ can be written as
\begin{equation}
    \label{eq:noisy VI2-2}
    \begin{aligned}[b]
    \tilde{V}^{\pi}_{k+1} & = T\tilde{V}^{\pi}_k + \tilde{w}_k, \\
                          & = \tilde{T}\tilde{V}^{\pi}_k.
    \end{aligned}
\end{equation}
We can now quantify the worst-case performance degradation due to the encryption-induced noise. The result is summarized in the next Theorem.

\begin{theorem}(Approximate VI, \cite{Bertsekas2009})
\label{thm:conv_synch}
Let $V^*$ be the optimal value function characterized by Lemma \eqref{lemma:optimal value} ($a$). Suppose $\tilde{V}^{\pi}_k$ is the sequence of vectors computed by the noisy VI \eqref{eq:noisy VI}. For an arbitrary initial condition $\tilde{V}_0$, we have
$$
\limsup\limits_{k\rightarrow\infty} \|V^* - \tilde{V}^{\pi}_k\|_{\infty} \leq \frac{\epsilon}{1-\gamma}.
$$
\end{theorem}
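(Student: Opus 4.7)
The plan is to combine the contraction property of the Bellman operator $T$ (Lemma 1(a)) with the fixed-point identity $V^* = TV^*$ (Lemma 1(b)) and iterate the one-step error recursion given by \eqref{eq:noisy VI2-2}. The key insight is that the noisy update $\tilde{T}\tilde V = T\tilde V + \tilde w$ is a perturbed contraction, so the error propagates as a geometric series in $\gamma$ plus an accumulated noise term.

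First I would write the one-step bound. Using $V^* = TV^*$ and \eqref{eq:noisy VI2-2},
\begin{equation*}
\|V^* - \tilde V^{\pi}_{k+1}\|_{\infty} = \|TV^* - T\tilde V^{\pi}_k - \tilde w_k\|_{\infty} \leq \|TV^* - T\tilde V^{\pi}_k\|_{\infty} + \|\tilde w_k\|_{\infty}.
\end{equation*}
Apply Lemma 1(a) to the first term and the bound $\|\tilde w_k\|_{\infty}\leq \epsilon$ to the second to obtain the recursion
\begin{equation*}
\|V^* - \tilde V^{\pi}_{k+1}\|_{\infty} \leq \gamma \|V^* - \tilde V^{\pi}_k\|_{\infty} + \epsilon.
\end{equation*}

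Next I would unroll this recursion from the initial condition $\tilde V_0$. Induction on $k$ gives
\begin{equation*}
\|V^* - \tilde V^{\pi}_k\|_{\infty} \leq \gamma^k \|V^* - \tilde V_0\|_{\infty} + \epsilon \sum_{i=0}^{k-1}\gamma^i \leq \gamma^k \|V^* - \tilde V_0\|_{\infty} + \frac{\epsilon(1-\gamma^k)}{1-\gamma}.
\end{equation*}
Since $\gamma\in[0,1)$, the first term vanishes as $k\to\infty$ and the second term increases monotonically to $\epsilon/(1-\gamma)$, yielding the claimed $\limsup$ bound.

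There is really no conceptual obstacle here: once the recursion is written down, the argument is the classical approximate-value-iteration telescoping estimate (as referenced in the theorem statement citing \cite{Bertsekas2009}). The only mildly delicate point to verify carefully is that the vectorized identity \eqref{eq:noisy VI2-2} is legitimate, i.e., that Lemma 2 can be applied uniformly across states so that a single scalar $\tilde w_k(s_t)$ bounded by $\epsilon$ captures the per-state effect of the noise; this has already been established in the derivation leading to \eqref{eq:noisy VI2}, so I would simply invoke it and proceed with the contraction argument above.
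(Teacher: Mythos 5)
Your proof is correct and follows essentially the same contraction-plus-bounded-noise recursion as the paper; the only cosmetic difference is that you substitute $V^* = TV^*$ at the outset and unroll the geometric series directly, whereas the paper first compares $\tilde{V}^{\pi}_k$ against a generic noiseless VI sequence and only sets $V^{\pi}_0 = V^*$ at the very end. Both arguments are the standard approximate-VI estimate and yield the identical bound.
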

\begin{proof}
Let $V^\pi_k$ be the sequence of vectors computed by the noiseless VI \eqref{eq:noiselesss VI} with arbitrary initial conditions $V^\pi_0$ and $\tilde{V}^{\pi}_0$. Then,

\begin{subequations}
\label{eq:pf1}
    \begin{align}
        \|V^{\pi}_{k+1} - \tilde{V}^{\pi}_{k+1}\|_{\infty} &= \|TV^{\pi}_{k} - T\tilde{V}^{\pi}_{k+1} - \tilde{w}_k\|_{\infty} \\ \label{eqn:pf1-1}
                                & \leq  \|TV^{\pi}_{k} - T\tilde{V}^{\pi}_{k}\|_{\infty} + \|\tilde{w}_k\|_{\infty} \\ \label{eqn:pf1-2}
                                & \leq \|TV^{\pi}_{k} - T\tilde{V}^{\pi}_{k}\|_{\infty} + \epsilon  \\ 
                                & \leq \gamma\|V^{\pi}_{k} - \tilde{V}^{\pi}_{k}\|_{\infty} + \epsilon,
    \end{align}
\end{subequations}
where the first inequality \eqref{eqn:pf1-1} follows from \eqref{eq:noisy VI2-2}, and \eqref{eqn:pf1-2} follows from the triangular inequality, and the last inequality is due to Lemma \eqref{lemma:optimal value}. Now define a sequence  $e_k$ of positive numbers by
\begin{equation}
    \label{eq:err}
    e_{k+1} = \gamma e_k + \epsilon
\end{equation}
with $e_0 = \|V^{\pi}_0 - \tilde{V}^{\pi}_0\|_{\infty}$.
By geometric series,
\[
    \limsup\limits_{k\rightarrow\infty} e_k = \frac{\epsilon}{1-\gamma}.
\]
Comparing \eqref{eq:pf1} and \eqref{eq:err}, we have by induction
\[
\|V^{\pi}_k - \tilde{V}^{\pi}_k\|_{\infty} \leq e_k, \forall k = 0, 1, 2, \dots.
\]
Therefore,
\begin{equation}
    \label{eq:pf1_end}
    \limsup\limits_{k\rightarrow\infty} \|V^{\pi}_k - \tilde{V}^{\pi}_k\|_{\infty} \leq \frac{\epsilon}{1-\gamma}.
\end{equation}
Since \eqref{eq:pf1_end} holds for any $V^{\pi}_0$, we can pick $V^{\pi}_0 = V^*$ for which we have $V^{\pi}_k = T^kV^* = V^*$ for $k = 0, 1, 2, \dots$ by lemma \eqref{lemma:optimal value}, we obtain the desired result.
\end{proof}

\subsubsection{Asynchronous VI}
It is often necessary or beneficial to run the VI algorithm asynchronously, or state-by--state, for simulations. We can define the asynchronous noiseless and noisy VI with the new mapping $F$ and $\tilde{F}$, respectively.
\begin{equation}
     FV^{\pi}_{k}(s_t) = 
     \begin{cases}
        (TV^{\pi}_{k})(s) & \text{if } s = s_t,\\
        V^{\pi}_k(s_t),                                         & \text{otherwise},
\end{cases}
\label{eq:async}
\end{equation}

\begin{equation}
\label{eq:asynch_noisy}
     \tilde{F}\tilde{V}^{\pi}_{k}(s_t) = 
     \begin{cases}
        (\tilde{T}\tilde{V}^{\pi}_{k})(s) & \text{if } s = s_t,\\
        \tilde{V}^{\pi}_k(s_t),                                         & \text{otherwise}.
\end{cases}
\end{equation}
In each case, we make the following assumption. 
\begin{assumption}
    \label{assume:A1}
    \begin{enumerate}[label=(\alph*)]
    \item Each state is visited for updates infinitely often.
    \item There exists a finite constant $M$, which is greater than or equal to the number of updates to sweep through each state at least once.
    \end{enumerate}
\end{assumption}

Assumption 1 is essential for the following theorem as it ensures that the mapping $F$ and $\tilde{F}$ are contraction operators as long as all the states are visited at least once and that the time it takes for visiting all the states at least once is finite. A common approach to ensure this assumption would be to incorporate exploring actions as seen in $\epsilon$-greedy policy.

Now, define the iteration sub-sequence $\{k_n\}_{n=0,1,2,\dots}$ such that $k_0 = 0$ and each state is visited at least once between $k_{n+1}$ and $k_n$. For instance, consider a finite MDP with $4$ states denoted $s_1,s_2,s_3,s_4$. If the state trajectory is $(s_1, s_2, s_4, s_3, s_1, s_1, s_2, s_4, s_3, s_1, s_1, ...)$ for $t = (1, 2, 3, 4, 5, 6, 7, 8, 9, 10, 11, ...)$, then the sub-sequence $k_n$ formed is $k_0 = 0, k_1 = 4, k_2 = 9, \dots$.
\begin{theorem}
\label{thm:conv_asynch}
Let $V^*$ be the optimal value function as defined previously. Suppose $\tilde{V}^{\pi}_{k_n}$ is the sequence of vectors computed by the asynchronous noisy VI \eqref{eq:asynch_noisy} under Assumption 1. For an arbitrary initial condition $\tilde{V}_0$, we have

$$
\limsup\limits_{n\rightarrow\infty} \|V^* - \tilde{V}^{\pi}_{k_n}\|_{\infty} \leq \frac{M\epsilon}{1-\gamma}.
$$

\end{theorem}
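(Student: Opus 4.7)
The plan is to mirror the proof of Theorem~\ref{thm:conv_synch} but at the scale of the sub-sequence $\{k_n\}$ instead of single iterations. The intuition is that the window $(k_n, k_{n+1}]$ plays the role of a single ``effective'' noisy Bellman update: by definition every state is refreshed at least once inside it, and by Assumption~\ref{assume:A1}(b) the window contains at most $M$ individual asynchronous updates, so the noise introduced during the window can accumulate by a factor proportional to $M$.

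First I would unpack the noisy asynchronous operator in \eqref{eq:asynch_noisy} via Lemma~\ref{lemma:max}, exactly as in the synchronous case, so that at the visited coordinate $s_t$, $(\tilde{F}\tilde{V}^\pi_k)(s_t) = (T\tilde{V}^\pi_k)(s_t) + \tilde{w}_k(s_t)$ with $|\tilde{w}_k(s_t)| \leq \epsilon$, while every other coordinate of $\tilde{V}^\pi_k$ is left untouched. Writing $d_k := \|V^* - \tilde{V}^\pi_k\|_\infty$, I would then establish a crude per-step growth estimate $d_{k+1} \leq d_k + \epsilon$ by comparing the updated coordinate against $V^*(s_t)=(TV^*)(s_t)$ using Lemma~\ref{lemma:optimal value}(a) and noting that the unchanged coordinates trivially satisfy the bound. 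Iterating yields $d_j \leq d_{k_n} + (j - k_n)\epsilon$ for every $j \in [k_n, k_{n+1}]$.

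The central step is to exploit the full-sweep property of the window. For each state $s$, let $t_s \in (k_n, k_{n+1}]$ denote the \emph{last} iteration in the window at which $s$ is updated; because $s$ is then untouched on $(t_s, k_{n+1}]$, I have $\tilde{V}^\pi_{k_{n+1}}(s) = (T\tilde{V}^\pi_{t_s-1})(s) + \tilde{w}_{t_s-1}(s)$. Subtracting $V^*(s) = (TV^*)(s)$, applying Lemma~\ref{lemma:optimal value}(a), and taking the maximum over $s$ gives $d_{k_{n+1}} \leq \gamma \max_s d_{t_s-1} + \epsilon$. Combining this with the growth bound from the previous paragraph and the estimate $t_s - 1 - k_n \leq M-1$ produces a per-window contraction
\[
d_{k_{n+1}} \;\leq\; \gamma\bigl(d_{k_n} + (M-1)\epsilon\bigr) + \epsilon \;\leq\; \gamma\, d_{k_n} + M\epsilon,
\]
where the last inequality uses $\gamma \in [0,1)$ and $M \geq 1$.

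Finally, I would close the argument as in Theorem~\ref{thm:conv_synch}: introduce the scalar recursion $e_{n+1} = \gamma e_n + M\epsilon$ with $e_0 = d_{k_0}$, observe via the geometric series that $\limsup_n e_n = M\epsilon/(1-\gamma)$, and verify $d_{k_n} \leq e_n$ by induction using the per-window contraction, thereby obtaining the stated bound. The main obstacle I anticipate is the bookkeeping in the central step: unlike the synchronous setting, where every coordinate is refreshed simultaneously and the contraction of $T$ can be applied to a single prior iterate, here each coordinate $s$ locks in its value at its own last-update time $t_s$, so I must track how much $d_{t_s-1}$ could have drifted above $d_{k_n}$ within the window. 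This drift is precisely what forces the factor $M$ into the final bound.
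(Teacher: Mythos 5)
Your proof is correct, and it reaches the bound by a route that differs from the paper's in two useful ways. The paper introduces an auxiliary \emph{noiseless} asynchronous sequence $V^\pi_{k}$, asserts (essentially without detailed justification, and with a typo in the displayed inequality) that the composed map over one full sweep is a $\gamma$-contraction and that the noise accumulates to at most $(k_{n+1}-k_n)\epsilon \leq M\epsilon$ per window, derives $\limsup_n \|V^\pi_{k_n}-\tilde{V}^\pi_{k_n}\|_\infty \leq M\epsilon/(1-\gamma)$, and only then converts this into a bound against $V^*$ via the reverse triangle inequality and the convergence $V^\pi_{k_n}\to V^*$. You instead compare directly against the fixed point $V^*=TV^*$, which eliminates the auxiliary sequence and the final reverse-triangle step entirely, and you replace the asserted per-window contraction with explicit bookkeeping: the per-step drift bound $d_{k+1}\leq d_k+\epsilon$ combined with the last-update times $t_s$ yields $d_{k_{n+1}} \leq \gamma\bigl(d_{k_n}+(M-1)\epsilon\bigr)+\epsilon \leq \gamma d_{k_n}+M\epsilon$. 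This is the more careful argument --- it actually proves the window-level recursion that the paper only states, and along the way it produces the slightly sharper constant $\bigl(\gamma(M-1)+1\bigr)\epsilon/(1-\gamma)$ before relaxing to $M\epsilon/(1-\gamma)$. The closing scalar recursion $e_{n+1}=\gamma e_n+M\epsilon$ is identical in both proofs. One minor point to pin down when writing this up formally: fix the indexing convention for $t_s$ (whether $t_s$ labels the iterate produced by the update or the iterate consumed by it) so that the identity $\tilde{V}^\pi_{k_{n+1}}(s)=(T\tilde{V}^\pi_{t_s-1})(s)+\tilde{w}_{t_s-1}(s)$ and the bound $t_s-1-k_n\leq M-1$ are stated consistently.
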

\begin{proof}
By definition of mapping $F$ and $\tilde{F}$, we can write:
    \begin{equation}
        \begin{aligned}[b]
        \|V^{\pi}_{k_{n+1}} - \tilde{V}^{\pi}_{k_{n+1}}\|_{\infty} & = \|F^{k_{n+1} - k_n}V^{\pi}_{k_n} - \tilde{F}^{k_{n+1} - k_n}\tilde{V}^{\pi}_{k_n}\|_{\infty} \\
         & \leq \|F^{k_{n+1} - k_n}V^{\pi}_{k_n} - \tilde{F}^{k_{n+1} - k_n}\tilde{V}^{\pi}_{k_n}\|_{\infty} \\
           & \quad \quad + (k_{n+1} - k_n)\epsilon
        \end{aligned}
    \end{equation}
Similar to the proof of Theorem $1$, the inequality is due to the bound of the error vector and the triangle inequality. Now, we note that the asynchronous mapping is a contraction mapping with respect  to a sequence index $k_n$. Thus, we have
\[
\|V^{\pi}_{k_{n+1}} - \tilde{V}^{\pi}_{k_{n+1}}\|_{\infty} \leq \gamma\|V^{\pi}_{k_n} - \tilde{V}^{\pi}_{k_n}\|_{\infty} + (k_{n+1} - k_n)\epsilon.
\]
By forming a sequence with $n$ as it is done in the previous proof, and by the Assumption $1-$(b), it yields
\begin{equation}
    \label{eq:asynch_pf2}
     \limsup\limits_{n\rightarrow\infty} \|V^{\pi}_{k_n} - \tilde{V}^{\pi}_{k_n}\|_{\infty} \leq \frac{M\epsilon}{1-\gamma}.
\end{equation}
By expanding the left hand side and applying the reverse traingle inequality
\begin{equation}
    \begin{aligned}[b]
    \label{eq:asynch_pf2-2}
    \|V^{\pi}_{k_n} - \tilde{V}^{\pi}_{k_n}\|_{\infty} & =  \|V^{\pi}_{k_n} -  V^* - (\tilde{V}^{\pi}_{k_n} - V^*)\|_{\infty} \\
    & \geq \|\tilde{V}^{\pi}_{k_n} - V^*\|_{\infty} - \|V^{\pi}_{k_n} - V^*\|_{\infty},
    \end{aligned}
\end{equation}
for which we know the second term approaches zero in the limit. Since the left hand side is bounded in the limit with constants, we achieve the proposed result.
\end{proof}

Theorem \ref{thm:conv_asynch} assures that the encrypted VI outsourced to the cloud also guarantees a comparable performance asymptotically.

\subsection{Encrypted Model-free RL}
Consider again the TD(0) update rule, \eqref{eq:TD(0)_update_rule}. The cloud receives the set of ciphertexts $\{\textbf{c}_v$, $\textbf{c}_{v'}$, $\textbf{c}_\alpha$, $\textbf{c}_\gamma$, $\textbf{c}_r\}$. Then, the update rule in the ciphertext domain becomes:
\begin{equation}
    \textbf{c}_{v}(t+1) = \textbf{c}_{v}(t) + \textbf{c}_{\alpha}\cdot\textbf{c}_{r} + \textbf{c}_{\alpha}\cdot\textbf{c}_{\gamma}\cdot\textbf{c}_{v'}(t) - \textbf{c}_{\alpha}\cdot\textbf{c}_{v}(t).
    \label{eq:HE_TD(0)_update_rule}
\end{equation}

Upon decryption of $\textbf{c}_{v}(t+1)$, we need to remember that the computed value is corrupted with the noise. A similar error analysis for the TD algorithms can be performed (perhaps using stochastic approximation theory). However, the formal analysis in this domain presents some difficulties. Whereas a conventional theory on stochastic approximation with an exogenous noise seen in \cite{Bertsekas2009} requires the noise to approach zero in the limit and bounded, the encryption-induced noise satisfies only the bounded condition. We thus only provide some implementation results at this time to gain some insight. We hope to rigorously prove this case as in the future work.

\section{Simulation}
We present implementation results of various model-free (TD(0), SARSA(0), and Z-learning) RL algorithms over the CKKS implementation scheme. The environment used is the grid world with the state size $|\mathcal{S}| = 36$ for all three and the action size $|\mathcal{A}| = 9$ for the first two. The available actions are up, up-right, right, down-right, down, down-left, left, up-left, and stay. The reward is fixed as randomly set real numbers to simulate unknown environment. The client starts with a policy $\pi$ at the box coordinate (1,1), top-left and the grid world has three trap states and one goal state, marked by letters \textbf{T} and \textbf{G}, which terminate the current episode. In each episode, we set the maximum number of steps at which the current episode terminates as well. The learning parameter set $\theta$ consists of the discount factor, learning rate, and exploration percentage. As soon as the new data set $h_t$ containing the reward (or cost) and values of states $s$ and $s'$ become available, the client uploads the encrypted data to the cloud.

Choosing encryption parameters is not straightforward but there exists a standardization effort, \cite{HomomorphicEncryptionSecurityStandard}. Also, an open-source library SEAL \cite{sealcrypto} provides a practical tutorial and accessible tools along with encryption parameter generator. We use the default 128-bit security ($\lambda = 128$) encryption parameters ($\mathcal{N}, \mathcal{Q}, \sigma$) generated by SEAL with the user input of $\mathcal{N}$. These are listed in Table \ref{tab:sec_param}. The size of $\mathcal{N}$ need not be this large as there is no parallel operation exploited for the particular example application considered in this paper. However, future applications such as multi agents RL or deep RL can find such capability useful as they contain many batch operations.

The CKKS encryption without employing a bootstrapping allows a predetermined depth for multiplication. Thus, for interested users, it is important to note the largest depth of ciphertext multiplications needed to evaluate the algorithm at hand. For example, TD(0) update rule considered in equation \eqref{eq:HE_TD(0)_update_rule} requires $\textbf{c}_{\alpha}\cdot\textbf{c}_{\gamma}\cdot\textbf{c}_{v'}(t)$ at the most. This is factored into the design your encryption parameters.

To examine the effect of encryption noise, we created two tables. One keeps track of the values of un-encrypted updates and the other keeps track of the values updated over HE. We recorded the error between two values through each iteration. At final stages of learning, these errors were confirmed to be bounded by some constant of very small magnitude. Although formal convergence analysis of model-free algorithms such as TD(0) are currently not available in this paper, simulation results suggest that they can be performed in the encrypted domain as equally well. Formal analysis of these algorithms based on the analysis already done is left as future research.
\begin{table}
    \begin{center}
    \caption{Encryption Parameters}
    \label{tab:sec_param}
    \begin{tabular}{|c|c|c|c|}
        \textbf{Param.} & \textbf{TD(0)} & \textbf{SARSA(0)} & \textbf{Z-learning} \\ 
        \hline
        \textbf{\textbf{$\mathcal{N}$}} & 8192 & 8192 & 16384 \\
        \textbf{$\mathcal{Q}$} & 219 & 219 & 441 \\
        \textbf{$\sigma$} & $\frac{8}{\sqrt{2\pi}}$ & $\frac{8}{\sqrt{2\pi}}$ & $\frac{8}{\sqrt{2\pi}}$ \\
        \end{tabular}
    \end{center}
\end{table}

\subsection{Prediction: TD(0)}
We implement a GLIE (greedy in the limit with infinite exploration) type learning policy seen in \cite{singh2000convergence}, where the client starts completely exploratory ($\varepsilon = 1.00$) and slowly becomes greedy ($\varepsilon = 0.00$) with more episodes. The learning rate is set to be $0$ for non-visited states and for visited states, we set $\alpha(s,t) = \frac{500}{500 + n(s,t)}$, with $n(s,t)$ counting the number of visits to the state $s$ at time $t$, to satisfy the standard learning rate assumptions. The discount factor is $\gamma = 0.9$. \textbf{*RULE*} for TD(0) is the right hand side of equation \eqref{eq:HE_TD(0)_update_rule}.

\label{Enc_algorithms}
\begin{algorithm}
\label{algo:encrypted_TDLearn}
\caption{Encrypted TD Learning}
\textbf{Client (\emph{Start})}
\begin{algorithmic}[1]
\State Perform an action $a$ and state transition $s \rightarrow s'$ on a policy $\pi_t$ to earn a reward $r$.
\State Collect the transition data set $h_t$. Index values from the current table using $h_t$.
\State Encode values (Q-values if SARSA(0); Z-values if Z-learning) and the set $\theta$ into the encoded messages $V_t$ and $\Theta$, respectively.
\State Encrypt $V_t$ and $\Theta$ to get $\tilde{V}$ and $\tilde{\Theta}$ and upload to the cloud.
\end{algorithmic}

\textbf{Cloud}
\begin{algorithmic}[1]
\State Extract ciphertexts from $\tilde{V}$ and $\tilde{\Theta}$. 

\textit{Example}: extract $\{\textbf{c}_z$, $\textbf{c}_{z'}$, $\textbf{c}_\alpha$, $\textbf{c}_l$, $\textbf{c}_{\frac{l}{2}}\}$ (Z learning).
\State Update: use the \textbf{*RULE*} with  ciphertexts extracted.
\State Upload the result of \textbf{*RULE*}, denoted by ciphertect \textbf{c*}, back to the Client.
\end{algorithmic}

\textbf{Client}
\begin{algorithmic}[1]
\State Decrypt the ciphertext \textbf{c*} to get the updated $\tilde{H}$.
\State Decode $\tilde{H}$ to get the newly synthesized table of values.
\State Update the policy $\pi_t$ if necessary.
\State go to \emph{\textbf{Start}}
\end{algorithmic}
\end{algorithm}

\subsection{Control: SARSA(0)}
The update rule for SARSA(0) is:
\begin{equation}
    \begin{split}
        \hat{Q}_{t+1}(s, a) = & \hat{Q}_{t}(s, a) + \alpha_t(s, a)\delta_t^{S},
        \label{eq:sarsa}
    \end{split}
\end{equation}
where $\delta_t^{S} = \left(r(s,a) + \gamma \hat{Q}_{t}(s', a') - \hat{Q}_{t}(s, a)\right).$

The policy for SARSA(0) is also a GLIE (greedy in the limit with infinite exploration) type learning policy used in TD(0). The learning rate $\alpha(s,t)$ and the discount factor $\gamma$ are unchanged. \textbf{*RULE*} for SARSA(0) is the right hand side of equation \eqref{eq:HE_TD(0)_update_rule} after substituting $\textbf{c}_{v}$ and $\textbf{c}_{v'}$ with $\textbf{c}_{Q}$ and $\textbf{c}_{Q'}$.

\begin{figure}[H]
  \centerline{\includegraphics[scale=0.36]{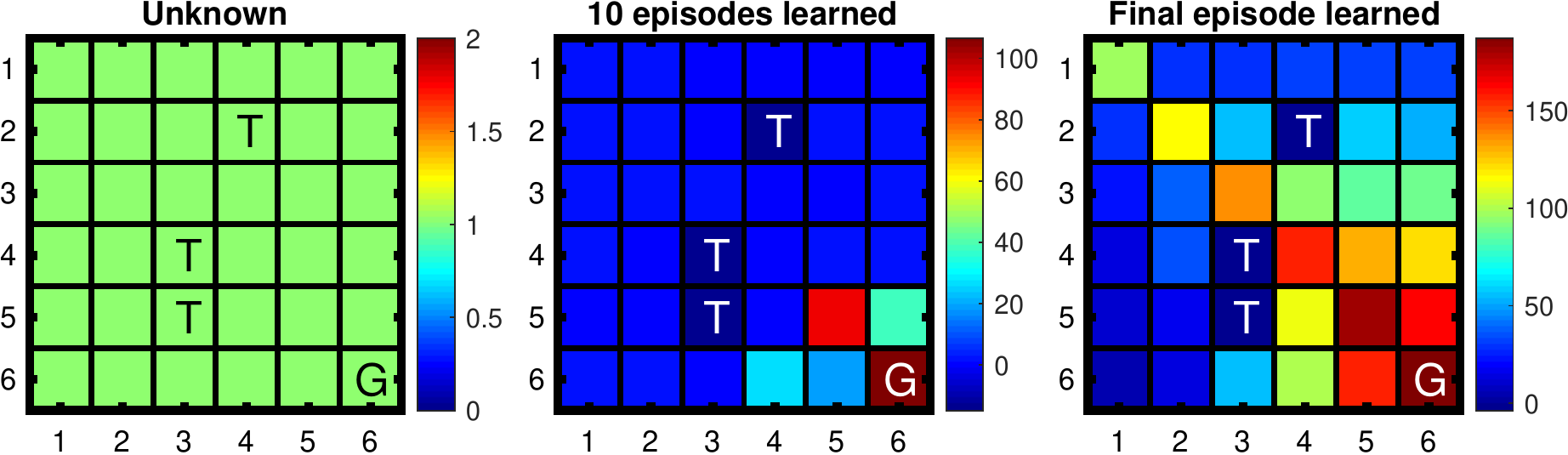}}
  \label{fig:HETD0_value_map}
  \caption{State-value map learned over encrypted TD(0); states are numbered from 1-36 (from left-right and top-to-bottom) with G (goal state), and T (trap states) marked.}
\end{figure}
\vspace{-3ex}
\begin{figure}[H]
  \centerline{\includegraphics[scale=0.36]{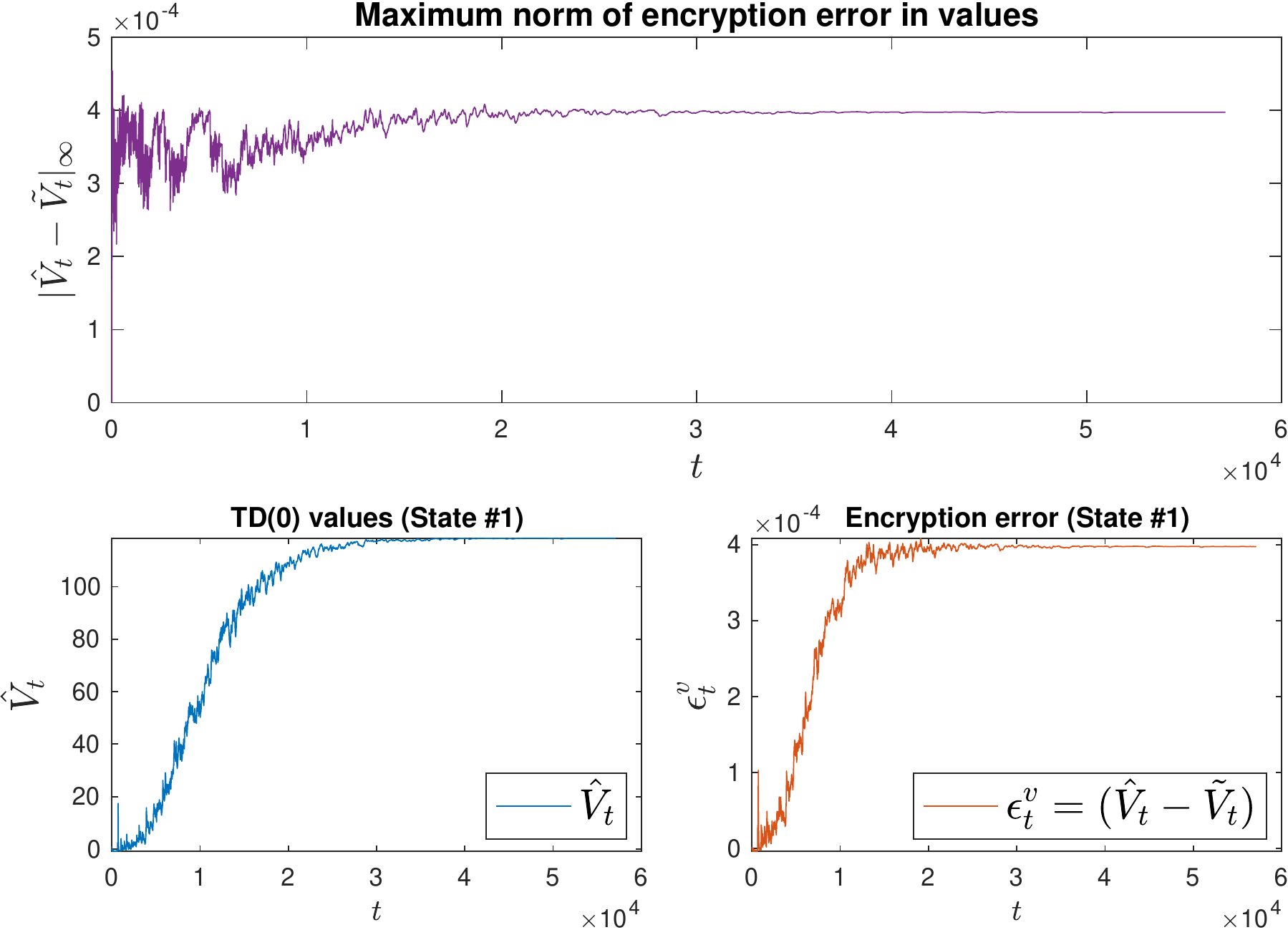}}
  \label{fig:HETD0_maxnorm}
  \caption{Maximum norm of encryption errors in TD(0) over each iterate $t$ (top). As seen in the un-encrypted value updates (bottom-left) and encryption error (bottom-right) plots of state 1, the value of state 1 gets increased as it propagates from reaching goal states more often once the policy becomes greedy (see Fig. 2) and its error dominates between iterates $t = 30000$ and $35000$.}
\end{figure}

\begin{figure}[H]
  \centerline{\includegraphics[scale=0.36]{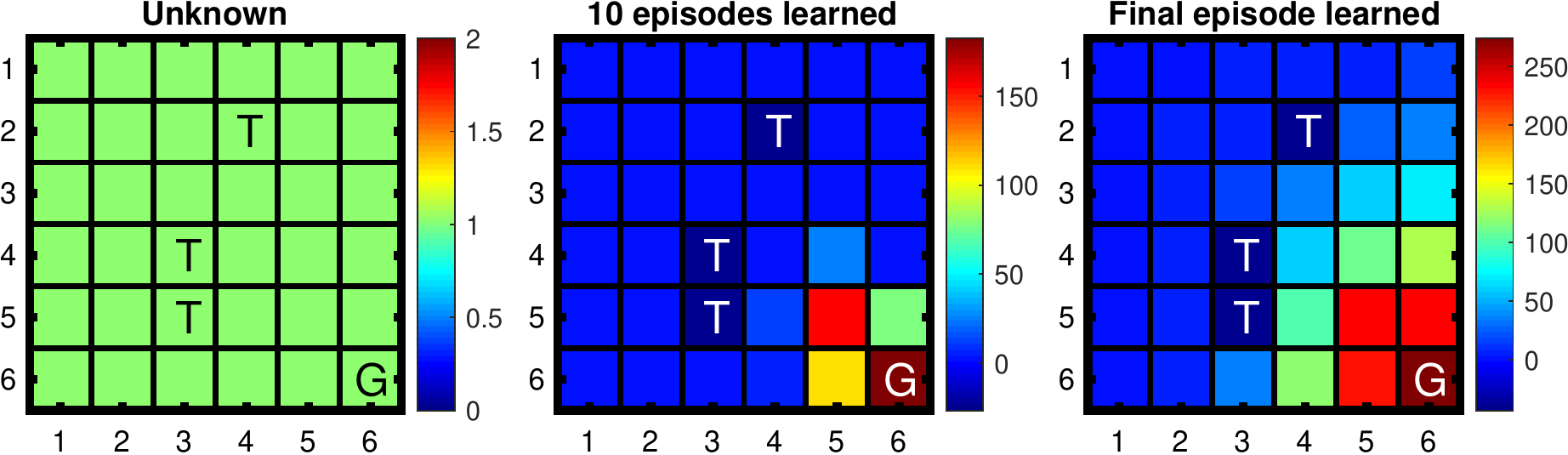}}
  \label{fig:HESARSA0_value_map}
  \caption{Value map learned over encrypted SARSA(0)}
\end{figure}
\vspace{-3ex}
\begin{figure}[H]
  \centerline{\includegraphics[scale=0.36]{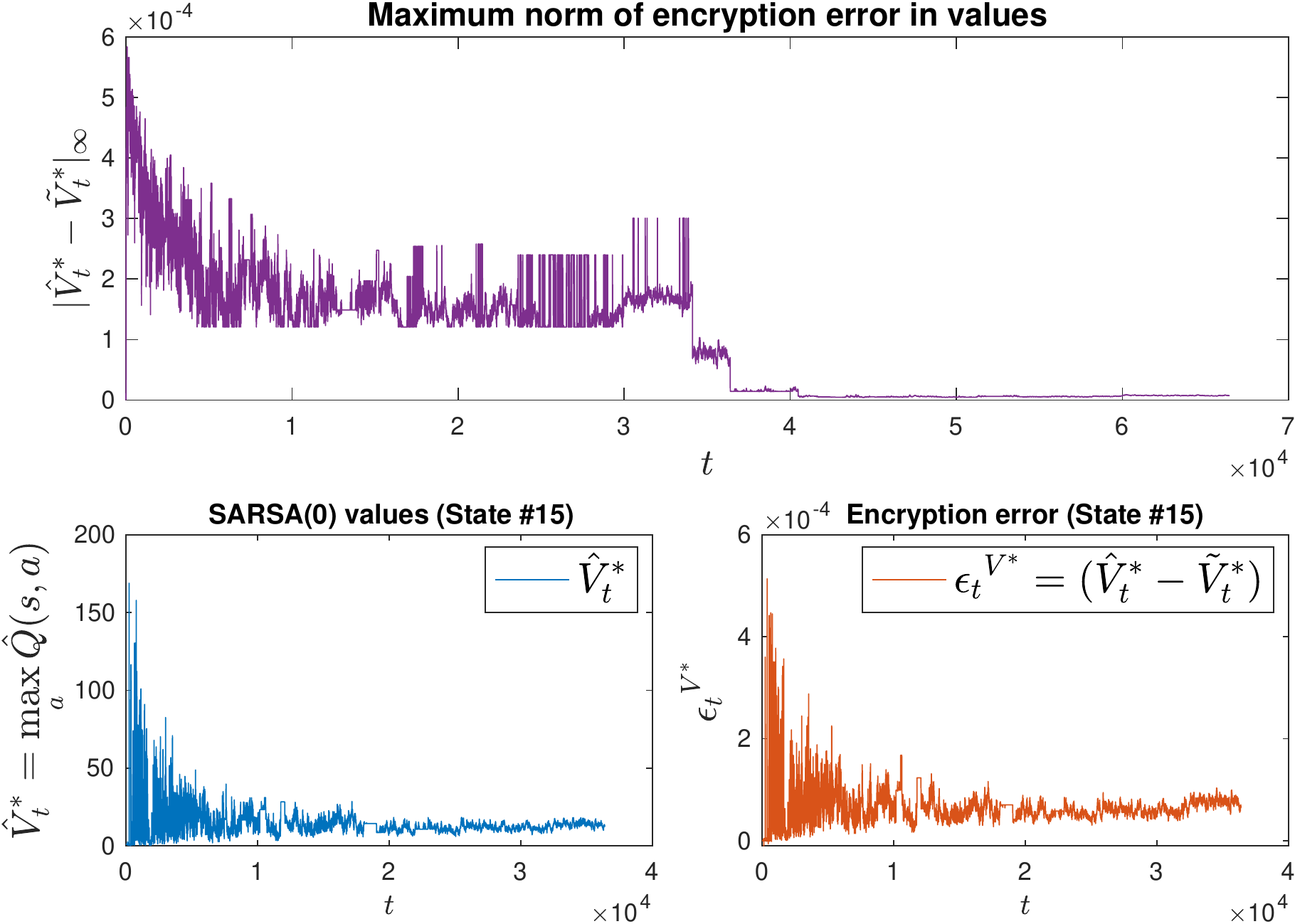}}
  \label{fig:HESARSA_maxnorm}
  \caption{Maximum norm of encryption errors in SARSA(0) over each iterate $t$ (top). The un-encrypted and decrypted values for each state are computed using the equation $V^* = \max_{a}\hat{Q}(s,a)$. Bottom-left shows the sample history of value $V^*(s = 15)$ and the associated encryption error $\hat{V}_t^\pi(s = 15) - \tilde{V}_t^\pi(s = 15)$}.
\end{figure}

\begin{figure}[H]
  \centerline{\includegraphics[scale=0.36]{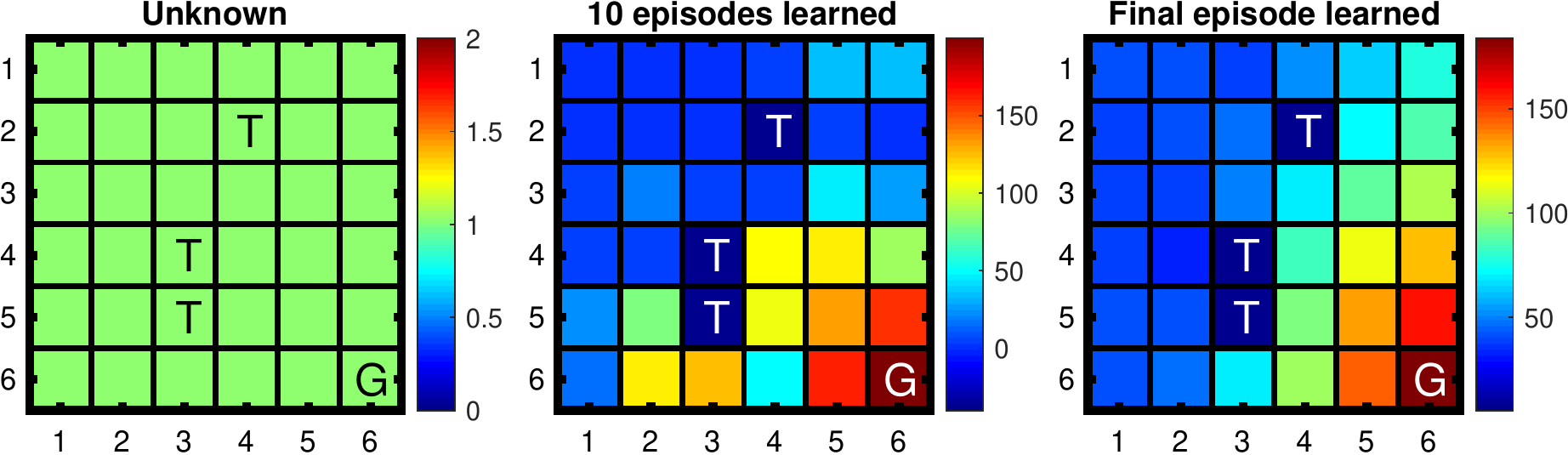}}
  \label{fig:HEZ_value_map}
  \caption{Z value map learned over encrypted Z-learning}
\end{figure}
\vspace{-3ex}
\begin{figure}[H]
  \centerline{\includegraphics[scale=0.36]{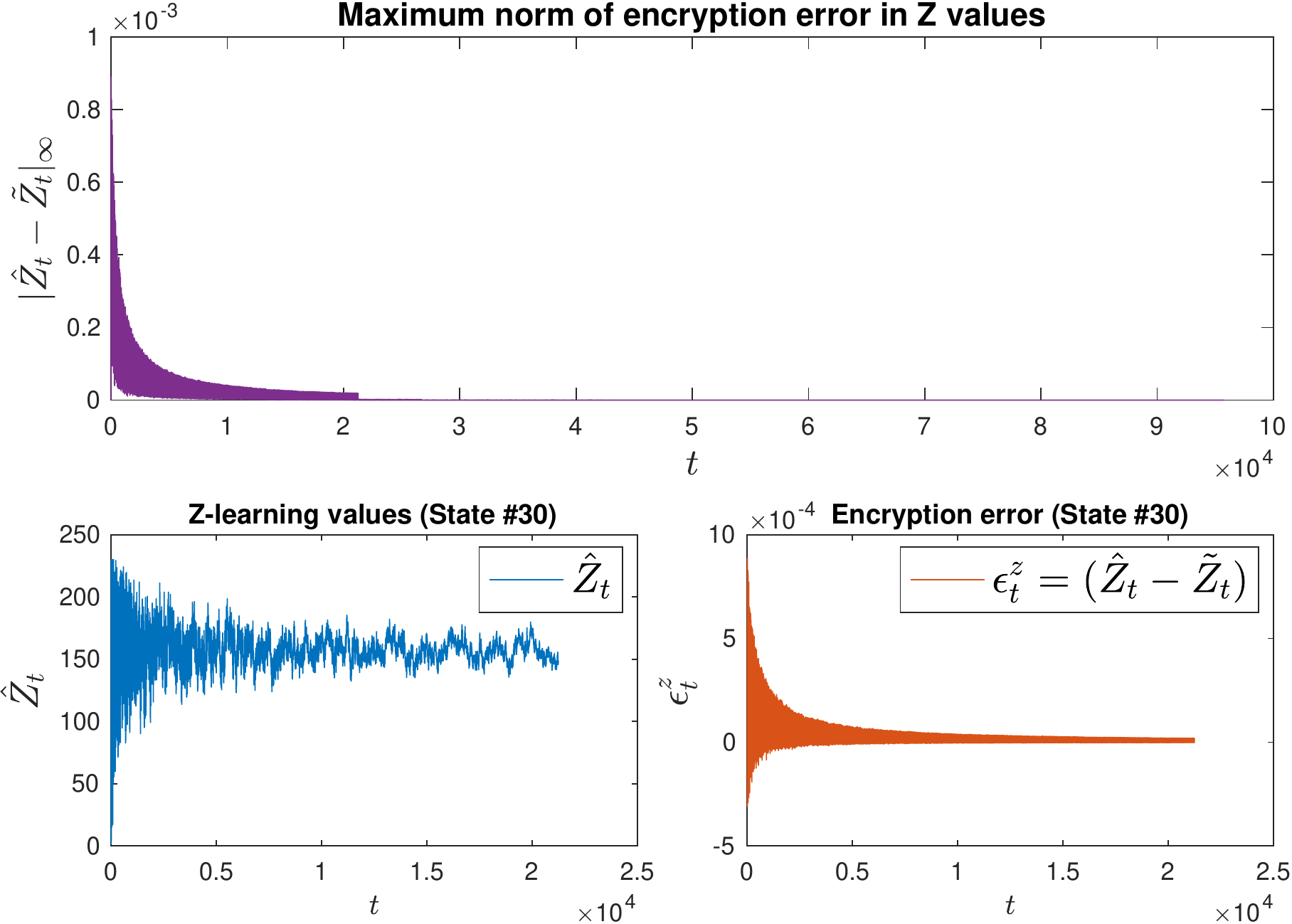}}
  \label{fig:HEZ_maxnorm}
  \caption{Maximum norm of encryption errors in Z-learning over each iterate $t$ (top). The state 30 converges to the highest Z value as seen in Fig. 4. Un-encrypted Z value updates show more variance than TD(0) counterparts due to constant exploration. Encryption error in state 30 is the highest error until about $t = 22000$ which can be seen in maximum norm but it is bounded by a very small number.}
\end{figure}

\subsection{Control: Z-Learning}
An off-policy learning control called Z-learning was proposed in \cite{Todorov11478}. It is formulated on the key observation that the control action can be regarded as an effort to change the passive state transition dynamics. The update rule for Z-learrning is:
\begin{equation}
    \hat{Z}_{t+1}(s) = \hat{Z}_t(s) + \alpha_t(s)\delta_t^{Z},
    \label{eq:Zlearning_update_rule}
\end{equation}
where $\delta_t^{Z} = exp(-l_{t})\hat{Z}_t(s') - \hat{Z}_t(s)$. The estimate $Z$ is named the desirability function and it uses the cost $l_{t}$ associated with the unknown state-transitions, rather than the reward. Z-learrning by default is exploratory. Thus, we fix the policy to be greedy ($\varepsilon = 1.00$) but it will continuously explore, too. The learning rate $\alpha(s,t)$ and the discount factor are unchanged. \textbf{*RULE*} for Z-learning is evaluating the right hand side of equation \eqref{eq:Zlearning_update_rule} after encrypting. Since evaluating $exp(-l_{t})$ over a ciphertext is not straightforward, we approximate with Taylor series.

\subsection{Results}
We observed that encryption-induced noise over time approached some small numbers with minimal fluctuations. Moreover, the effects of encryption-induced noise were minimal regarding the convergence of values. This observation complements the analysis on VI in Section \ref{Problem} as the TD-algorithms are sampling based value estimation algorithms.

Encryption and decryption are all done at the client's side and so significant computation time is expected for the client. For this reason, the ideal application will require less frequent needs for uploading and downloading and more advanced synthesis procedure that operate on a large set of data.

\section{Conclusion and Future Work}
\label{conclusion}
We considered an architecture of confidential cloud-based
RL over LHE. For the model-based RL, we showed that the impact of the encryption noise on the convergence performance can be analytically bounded. For the model-free RL, we numerically tested implementations of TD(0), SARSA(0), and Z-learning and numerically confirmed that the impacts of the encryption noise on these algorithms are also minimal. Although the applications considered do not necessarily require the cloud, we can develop the framework to adopt to more advanced synthesis algorithms in the future.

There are numerous directions to extend this paper. First, the effort to derive analytical performance guarantees for encrypted RL (including the model-free schemes considered in this paper) is necessary to prove the utility of the encrypted RL concept.
Second, an encrypted RL scheme that does not require periodic decryption (similar to the case in \cite{kim2019dynamic}) is highly desired as the periodic decryption and communication between the cloud and the controller is costly. Finally, more extensive numerical experiments are needed to fully understand the potential of advanced RL (e.g., deep Q learning) over HE. The interplay between computational overhead, delay, accuracy and security levels must be studied from both theoretical and experimental perspectives.

\section{Appendix}
\label{appendix}
\subsection{CKKS Encryption Scheme}
CKKS encoding procedure maps the vector of complex numbers sized $\frac{\mathcal{N}}{2}$ to the message $m$ in plain-text space $\mathcal{P}$. The plaintext $\mathcal{P}$ is defined as the set $\mathbb{Z}_\mathcal{Q}[\mathcal{X}]/\left(\mathcal{X}^\mathcal{N} + 1\right)$, where $\mathbb{Z}_\mathcal{Q}[\mathcal{X}]$ denotes the polynomials of $x \in \mathcal{X}$ whose coefficients are integers modulo $\mathcal{Q}$, and $\mathcal{X}^\mathcal{N} + 1$ denotes the degree $\mathcal{N}$ cyclotomic polynomials. This allows for CKKS encryption scheme to accept multiple complex-valued inputs of a size $\frac{\mathcal{N}}{2}$ at once, which is convenient for many computing applications.

Then, encryption on the message $m \in \mathcal{P}$ yields the ciphertext $\textbf{c} \in \mathcal{C}$
\begin{equation}
        Enc_{pk}(m) = \textbf{c},
    \label{eq:ckks_enc}
\end{equation}

Each ciphertext is designated with a level $L$, which indicates how many multiplications you can perform before decryption fails. This is a key limitation in comparison to FHE.

The encryption also creates a noise polynomial $e$. A properly encrypted ciphertext has a bound on the message $\|m\|_\infty \leq p$ and also on the noise $\|e\|_\infty \leq B$ when the norm is defined on those polynomials. Thus, a CKKS ciphertext can be defined as a tuple $\textbf{c} = (c, L, p, B)$. Then, decryption can be defined as follows.
\begin{equation}
        Dec(\textbf{c}, sk) \equiv m + e \pmod{q_{L}},
    \label{eq:ckks_dec}
\end{equation}
where $q_L$ is the coefficient modulus at level $L$.

For ciphertexts $\textbf{c}_i = Enc_{pk}(m_i)$ at the same level $L$,
\begin{equation}
    \begin{aligned}[b]
        Add(\textbf{c}_1, \textbf{c}_2) & = \textbf{c}_{add}  \\
                                        & = (c_{add}, L, p_1 + p_2, B_1 + B_2),
    \end{aligned}
    \label{eq:ckks_add_1}
\end{equation}
\begin{equation}
        Dec(Add(\textbf{c}_1, \textbf{c}_2), sk) \equiv m_1 + m_2 + e_{add} \pmod{q_L},
    \label{eq:ckks_add_2}
\end{equation}
where $\|e_{add}\|_{\infty} \leq B_1 + B_2$.

\bibliographystyle{IEEEtran}
\bibliography{IEEEabrv.bib, biblio.bib}
\end{document}